\documentclass[aps,preprint,nofootinbib,superscriptaddress,longbibliography,showkeys]{revtex4-2}

\usepackage[T1]{fontenc}
\usepackage{lmodern}
\usepackage{amsmath,amssymb,bm,amsthm}
\usepackage{graphicx}
\usepackage{booktabs}
\usepackage{microtype}
\usepackage{url}
\usepackage[hidelinks]{hyperref}
\hypersetup{
  pdftitle={Einstein--Langevin source correlations in closed non-relativistic systems},
  pdfauthor={Peng Xu and Li-e Qiang},
  pdfkeywords={Einstein--Langevin equation, stochastic semiclassical gravity, stress--energy fluctuations, unequal-time source correlations, Diósi--Penrose model, gravitational decoherence}
}

\newtheorem{proposition}{Proposition}
\begin{document}

\title{Einstein--Langevin source correlations in closed non-relativistic systems}

\author{Peng Xu}
\email{xupeng@imech.ac.cn}
\affiliation{Center for Gravitational Wave Experiment, National Microgravity Laboratory, Institute of Mechanics, Chinese Academy of Sciences, Beijing 100190, China}
\affiliation{Key Laboratory of Gravitational Wave Precision Measurement of Zhejiang Province, Hangzhou Institute for Advanced Study, UCAS, Hangzhou 310024, China}
\affiliation{Taiji Laboratory for Gravitational Wave Universe (Beijing/Hangzhou), University of Chinese Academy of Sciences (UCAS), Beijing 100049, China}

\author{Li-e Qiang}
\email{qianglie@nssc.ac.cn}
\affiliation{National Space Science Center, Chinese Academy of Sciences, Beijing 100190, China}

\begin{abstract}
Stochastic semiclassical gravity represents stress--energy fluctuations through the Einstein--Langevin noise kernel. We examine its Newtonian projection for specified closed, particle-number-conserving non-relativistic sources and distinguish the source covariance from a reduced matter channel. The connected one-particle covariance is state dependent and differs from the universal potential-noise law postulated in the Di\'osi--Penrose model. To relate the source spectrum to dephasing, we define a separate static probe subject to a prescribed Gaussian noise law. For a finite-variance source operator in an isolated non-degenerate eigenstate, the full connected spectral measure has a gap around zero. The probe dephasing exponent is then bounded and its long-time slope vanishes. Stationary doublets can instead carry static or low-frequency spectral lines, while a specified freely spreading Gaussian source gives a finite limiting exponent. Continuum results depend on preparation, normalisation, infrared regularisation and the order of limits. We also compare coherent and incoherent many-body preparations and establish no-signalling for fixed classical-noise channels. These results identify the spatial, temporal and dynamical assumptions needed to relate source fluctuations to a matter-decoherence model. They do not exclude Markovian limits in other number-conserving non-relativistic preparations.
\end{abstract}

\keywords{Einstein--Langevin equation; stochastic semiclassical gravity; stress--energy fluctuations; unequal-time source correlations; Di\'osi--Penrose model; gravitational decoherence}

\maketitle
\clearpage

\section{Introduction}\label{sec:intro}

Semiclassical gravity couples a classical spacetime geometry to the expectation value of a quantum stress--energy tensor. Its weak-field non-relativistic limit gives the Schr\"odinger--Newton equation~\cite{Moller1962,Rosenfeld1963,Kiefer1994,Diosi1984,Penrose1996,Giulini2012,Moroz1998}. The equation is widely used as a mean-field model of self-gravitating matter~\cite{Giulini2014}, although its status as a single-particle equation derived from quantum field theory has been questioned~\cite{Anastopoulos2014}. Related applications include ultralight-matter cosmology~\cite{Hu2000,Guzman2004,Kopp2017} and optical analogues~\cite{Bekenstein2015,Roger2016,Paz2025}. The Schr\"odinger--Newton equation is not an open-system equation. By itself it neither produces decoherence nor resolves the causality problems that arise when deterministic nonlinear evolution is combined with the projection postulate~\cite{Bahrami2014,Bielinska2025}.

The Di\'osi--Penrose (DP) model instead postulates temporally white Newtonian potential noise with inverse-distance spatial correlations and a short-distance smearing scale~\cite{Diosi2007,Bassi2003,Bassi2013,Bassi2017}. Conventions differ by factors of two depending on how the noise kernel and Lindblad coefficient are normalised. These differences do not affect the structural comparison below. Collapse models have motivated experimental tests of gravity-related decoherence~\cite{Nimmrichter2014,Carlesso2022,Donadi2021,Vinante2020,Fadel2023} and broader theoretical analyses~\cite{Tilloy2019,Asprea2021,Anastopoulos2022}. Consistent Newtonian hybrid dynamics can also be constructed through continuous measurement and feedback, in which case the decoherence kernel is part of the specified hybrid coupling~\cite{Kafri2014,Tilloy2016,Tilloy2017}. Related first-principles master-equation derivations based on weak gravitational correlations were given by Anastopoulos and Hu~\cite{Anastopoulos2013}. Bera, Mohan and Singh already obtained the connected single-particle gravitational-potential covariance, nonwhite correlations and an unequal-time perturbative master-equation treatment within the Einstein--Langevin (EL) framework~\cite{Bera2015}. The present paper examines when these preparation-dependent spectra yield a long-time dephasing slope for a specified static probe.

The issue is a low-energy matching problem in stochastic semiclassical gravity. The EL kernel is a covariant connected correlator of the matter stress--energy tensor and provides a stochastic representation of metric fluctuations. A matter-decoherence channel is a further reduced description. Neither temporal whiteness nor a dephasing rate follows from an equal-time covariance. For a probe with a static diagonal coupling, a regular continuous spectrum near zero can determine the long-time slope under suitable hypotheses. Transition-inducing couplings can instead sample nonzero Bohr frequencies. A comparison with DP noise must therefore specify the source covariance, gravitational response and probe dynamics separately.

Giovannini's analysis of relic gravitons also illustrates the importance of retaining correlations that depend on the two times separately~\cite{Giovannini2024}. That radiative cosmological setting differs from the matter sources studied here, and nonstationarity alone does not establish non-Markovian reduced dynamics.

We carry out this comparison for closed sources evolving autonomously in the particle-number-conserving non-relativistic limit, with a distinct static probe used to diagnose their correlations. We calculate the one-particle source covariance and its Newtonian projection, then examine isolated bound states, stationary doublets, a normalisable free Gaussian packet and continuum preparations. We also derive the fixed-number many-body covariance and establish no-signalling for prescribed classical-noise channels. The analysis distinguishes source spectral properties, comparison with the DP law and construction of a reduced matter channel. It does not exclude Markovian limits for other preparations.

\section{Einstein--Langevin kernel and strict non-relativistic reduction}\label{sec:covariant}

In stochastic semiclassical gravity, linear fluctuations about a semiclassical background are represented by a stochastic source $\xi_{\mu\nu}$~\cite{Calzetta2008,Hu2004,Martin1999,Phillips2001}. For a metric perturbation $h_{\mu\nu}$,
\begin{equation}
\mathcal G^{(1)}_{\mu\nu}[h]-\frac{8\pi G}{c^4}\langle \hat T^{(1)}_{\mu\nu}[h]\rangle
=\frac{8\pi G}{c^4}\,\xi_{\mu\nu},
\label{eq:ELcov}
\end{equation}
where $\mathcal G^{(1)}_{\mu\nu}$ is the linearised gravitational operator. The source has zero mean and covariance
\begin{align}
\langle \xi_{\mu\nu}(x)\xi_{\alpha\beta}(x')\rangle_{\rm s}
&=N_{\mu\nu\alpha\beta}(x,x')\nonumber\\
&=\frac12\langle\{\hat t_{\mu\nu}(x),\hat t_{\alpha\beta}(x')\}\rangle,
\label{eq:noise_covariant}
\end{align}
with $\hat t_{\mu\nu}=\hat T_{\mu\nu}-\langle\hat T_{\mu\nu}\rangle$. This stochastic representation reproduces symmetrised metric correlations obtained from the closed-time-path influence functional. It is not, by itself, a reduced equation for the matter state. Treating $\xi_{\mu\nu}$ as a bath acting back on that state requires an additional coarse graining and a specified temporal spectrum.

For weak fields, write $g_{00}=-1-2\Phi/c^2+O(c^{-4})$. The $00$ component of Eq.~\eqref{eq:ELcov} reduces to a stochastic Poisson equation with $\nabla^2\xi_\Phi=4\pi G\,\xi_{00}/c^2$. Since $\nabla^2|\mathbf r-\mathbf x|^{-1}=-4\pi\delta^{(3)}(\mathbf r-\mathbf x)$, the fluctuating Newtonian potential is
\begin{equation}
\xi_\Phi(\mathbf r,t)=-G\int d^3x\,\frac{\xi_{00}(\mathbf x,t)/c^2}{|\mathbf r-\mathbf x|}.
\label{eq:xi_projection}
\end{equation}
The sign drops out of the potential covariance,
\begin{align}
C_\Phi(\mathbf r,t;\mathbf r',t')
&=\langle \xi_\Phi(\mathbf r,t)\xi_\Phi(\mathbf r',t')\rangle_{\rm s}\nonumber\\
&=G^2\int d^3x\,d^3y\,
\frac{N_{0000}(\mathbf x,t;\mathbf y,t')/c^4}
{|\mathbf r-\mathbf x|\,|\mathbf r'-\mathbf y|}.
\label{eq:Cphi_cov}
\end{align}
Equation~\eqref{eq:Cphi_cov} keeps the matter source covariance distinct from the Newtonian response.

For a relativistic complex scalar field, the particle and antiparticle components may be written at leading non-relativistic order as
\begin{equation}
\hat\Phi\simeq\sqrt{\frac{\hbar}{2m_sc^2}}
\left[e^{-im_sc^2t/\hbar}\hat\psi_s+e^{+im_sc^2t/\hbar}\hat\chi_s^\dagger\right].
\label{eq:NRdecomp}
\end{equation}
The slow fields $\hat\psi_s$ and $\hat\chi_s$ annihilate independent particles and antiparticles. At leading order, they obey Schr\"odinger evolution. The adjoint in the second term is required because the complex field creates an antiparticle. Momentum-amplitude corrections are of relative order $v^2/c^2$.

We reduce the stress operator before forming its covariance. If $P$ projects onto the no-antiparticle, fixed-particle-number sector, the retained mass density is
\begin{equation}
\hat\mu_s=\left.\frac{P:\hat T_{00}:P}{c^2}\right|_{\text{leading non-relativistic order}}
=m_sP\hat\psi_s^\dagger\hat\psi_sP.
\label{eq:T00NR}
\end{equation}
Normal ordering removes the vacuum one-point term, whereas projection removes pair channels from the retained operator. These operations are distinct: for a Hermitian smeared stress operator $A$,
\begin{equation}
PA^2P=(PAP)^2+PA(1-P)AP.
\label{eq:projection_before_covariance}
\end{equation}
Restricting the external state alone would retain the second term. The retained covariance is formed after projection and therefore excludes the omitted relativistic pair and vacuum channels.

Below, $N^{\rm ret}_{0000}/c^4$ denotes the centred symmetrised covariance of $\hat\mu_s$, with ``ret'' meaning retained, not retarded. Its Newtonian projection is $C_\Phi$ in Eq.~\eqref{eq:Cphi_cov}. The rest-energy phase cancels from the retained density bilinear, so its spectrum is governed by energy differences of the source Hamiltonian. This reduction also omits subleading stress--energy corrections, radiative metric modes and retarded relativistic backreaction. It does not determine the low-frequency weight for every source state or define a reduced matter channel.

\section{Fixed-number equal-time source covariance}\label{sec:equal_time}

Writing $\hat\psi$ for the retained source field, we consider a non-relativistic bosonic field obeying $[\hat\psi(\mathbf r),\hat\psi^\dagger(\mathbf r')]=\delta^{(3)}(\mathbf r-\mathbf r')$ and the normalised one-particle state
\begin{equation}
|\Psi\rangle=\int d^3z\,\psi(\mathbf z)\hat\psi^\dagger(\mathbf z)|0\rangle,
\qquad \int d^3z\,|\psi(\mathbf z)|^2=1.
\label{eq:oneparticle}
\end{equation}
With $p(\mathbf x)=|\psi(\mathbf x)|^2$, the equal-time $00$-$00$ kernel is
\begin{align}
\frac{N^{\rm ret}_{0000}(\mathbf x,\mathbf y)}{c^4}
&=m_s^2K_1(\mathbf x,\mathbf y),\nonumber\\
K_1(\mathbf x,\mathbf y)
&=\frac12\langle\{\delta\hat n(\mathbf x),\delta\hat n(\mathbf y)\}\rangle.
\label{eq:Kdef_CQG}
\end{align}
The density product satisfies
\begin{equation}
\hat n(\mathbf x)\hat n(\mathbf y)
=\delta^{(3)}(\mathbf x-\mathbf y)\hat\psi^\dagger(\mathbf x)\hat\psi(\mathbf y)
+\hat\psi^\dagger(\mathbf x)\hat\psi^\dagger(\mathbf y)\hat\psi(\mathbf x)\hat\psi(\mathbf y).
\label{eq:nn_product}
\end{equation}
The second term annihilates the one-particle state. Consequently,
\begin{equation}
\langle\hat n(\mathbf x)\hat n(\mathbf y)\rangle
=p(\mathbf x)\delta^{(3)}(\mathbf x-\mathbf y),
\end{equation}
and the connected source covariance is
\begin{equation}
K_1(\mathbf x,\mathbf y)
=p(\mathbf x)\delta^{(3)}(\mathbf x-\mathbf y)-p(\mathbf x)p(\mathbf y).
\label{eq:K1_CQG}
\end{equation}
Since $\hat n(\mathbf x)|0\rangle=0$ in the reduced non-relativistic theory, no additional vacuum subtraction is required in this number-density sector.

The connected subtraction enforces
\begin{equation}
\int d^3y\,K_1(\mathbf x,\mathbf y)=0,\qquad\int d^3x\,d^3y\,K_1(\mathbf x,\mathbf y)=0,
\label{eq:masscons}
\end{equation}
so the total mass of a fixed-particle-number state does not fluctuate. As a connected covariance, $K_1$ is positive semidefinite. Explicitly, for every real test function $f$ with finite second moment,
\begin{align}
&\int d^3x\,d^3y\,f(\mathbf x)K_1(\mathbf x,\mathbf y)f(\mathbf y)\nonumber\\
&\qquad=\int d^3x\,p(\mathbf x)f^2(\mathbf x)
-\left[\int d^3x\,p(\mathbf x)f(\mathbf x)\right]^2\ge0.
\label{eq:positivity}
\end{align}

Substituting Eq.~\eqref{eq:K1_CQG} into Eq.~\eqref{eq:Cphi_cov} at equal time gives
\begin{equation}
C_\Phi(\mathbf r,\mathbf r')
=G^2m_s^2\left[\mathcal I(\mathbf r,\mathbf r')-\phi(\mathbf r)\phi(\mathbf r')\right],
\label{eq:Cphi_final_CQG}
\end{equation}
where
\begin{align}
\mathcal I(\mathbf r,\mathbf r')
&=\int d^3x\,\frac{p(\mathbf x)}{|\mathbf r-\mathbf x|\,|\mathbf r'-\mathbf x|},\nonumber\\
\phi(\mathbf r)&=\int d^3x\,\frac{p(\mathbf x)}{|\mathbf r-\mathbf x|}.
\label{eq:Iphi}
\end{align}
For a localised source, the kernel generally depends on the wave packet and not only on $\mathbf r-\mathbf r'$. It is therefore not translation invariant. The connected spatial structure was obtained in Ref.~\cite{Bera2015}. Our prefactor, fixed by Eqs.~\eqref{eq:noise_covariant} and \eqref{eq:xi_projection}, is four times that in its Eq.~(22).

For the two-point Fourier transform we use
\begin{equation}
\widetilde K_1(\mathbf k,\mathbf k')
=\int d^3x\,d^3y\,e^{-i\mathbf k\cdot\mathbf x}e^{+i\mathbf k'\cdot\mathbf y}K_1(\mathbf x,\mathbf y),
\label{eq:two_point_FT}
\end{equation}
with $\widetilde f(\mathbf k)=\int d^3x\,e^{-i\mathbf k\cdot\mathbf x}f(\mathbf x)$. Then
\begin{equation}
\frac{\widetilde N^{\rm ret}_{0000}(\mathbf k,\mathbf k')}{c^4}
=m_s^2\left[\widetilde p(\mathbf k-\mathbf k')
-\widetilde p(\mathbf k)\widetilde p^*(\mathbf k')\right].
\label{eq:Nk_CQG}
\end{equation}
Using $\widetilde{|\mathbf r|^{-1}}=4\pi/k^2$, the projected covariance is
\begin{align}
\langle \widetilde\xi_\Phi(\mathbf k)\widetilde\xi_\Phi^*(\mathbf k')\rangle_{\rm s}
&=\frac{16\pi^2G^2m_s^2}{k^2k'^2}
\left[\widetilde p(\mathbf k-\mathbf k')
-\widetilde p(\mathbf k)\widetilde p^*(\mathbf k')\right].
\label{eq:Ck_CQG}
\end{align}
The wave-packet form factor couples different momentum modes. The unsmeared DP potential-noise kernel is instead diagonal and proportional to $\delta(\mathbf k-\mathbf k')/k^2$. Table~\ref{tab:comparison} collects the main differences.

\begin{table}[t]
\caption{Einstein--Langevin source covariances for the specified localised one-particle preparations, compared with the phenomenological Di\'osi--Penrose potential-noise model.}
\label{tab:comparison}
\centering
\small
\setlength{\tabcolsep}{8pt}
\begin{tabular}{p{0.18\linewidth}p{0.34\linewidth}p{0.34\linewidth}}
\toprule
\raggedright Feature & \raggedright Specified source preparations & \raggedright DP potential-noise model \tabularnewline
\midrule
\raggedright Origin & \raggedright Connected stress--energy source covariance & \raggedright Postulated effective potential noise \tabularnewline
\raggedright Spatial structure (one particle) & \raggedright $G^2m_s^2[\mathcal I-\phi\phi]$ & \raggedright Smeared Newton kernel (unsmeared form: $G\hbar/r$) \tabularnewline
\raggedright State dependence & \raggedright Explicitly state dependent & \raggedright Postulated state independent and translation invariant \tabularnewline
\raggedright Momentum structure & \raggedright Non-diagonal mode coupling & \raggedright Diagonal, with a spatial-smearing form factor \tabularnewline
\raggedright Fixed total mass & \raggedright Enforced by the connected subtraction & \raggedright Not formulated as a source covariance \tabularnewline
\raggedright Temporal structure & \raggedright Determined by the unequal-time matter spectrum & \raggedright White-noise structure imposed \tabularnewline
\bottomrule
\end{tabular}
\parbox{0.92\textwidth}{\footnotesize The spatial entries compare kernel shapes. The unsmeared DP expression $G\hbar/r$ is the coefficient multiplying $\delta(t-t')$, whereas Eq.~\eqref{eq:Cphi_final_CQG} is an equal-time covariance. Their temporal dimensions are therefore not being identified.}
\end{table}

Figure~\ref{fig:mechanism_cqg} gives a schematic summary of the equal-time and unequal-time results.

\begin{figure}[htbp]
\centering
\includegraphics[width=0.97\textwidth]{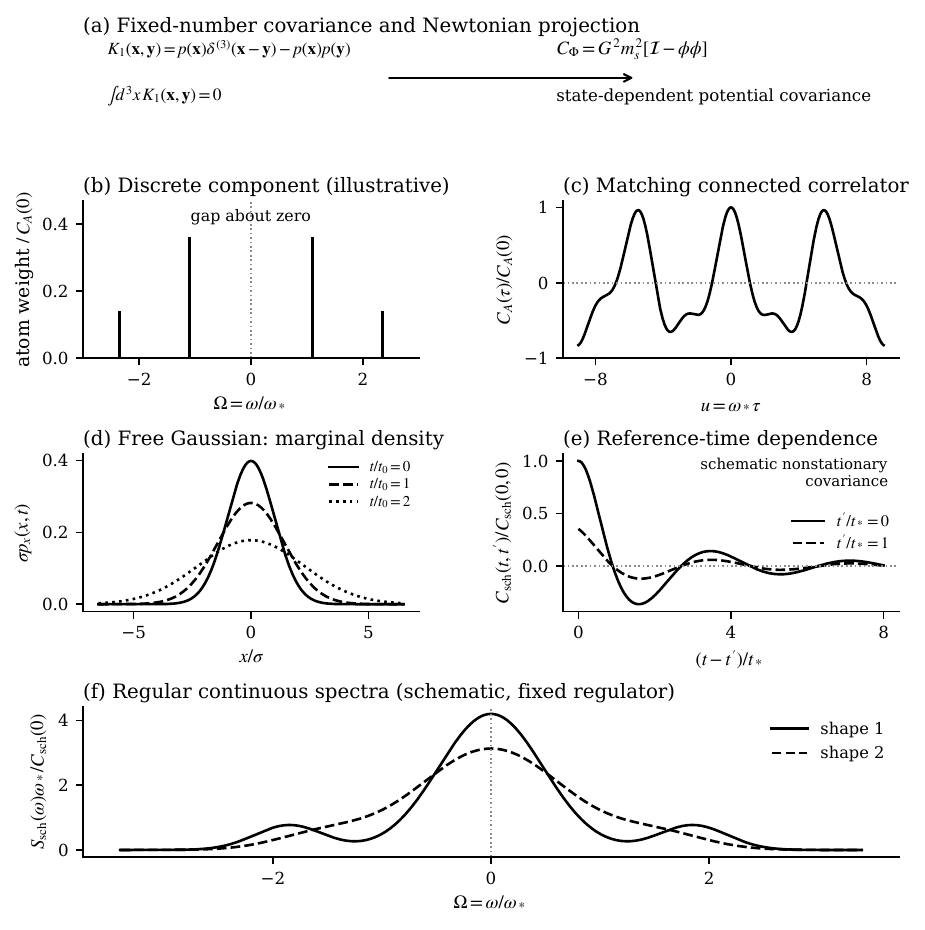}
\begingroup
\linespread{1}\selectfont
\caption{Schematic spatial and temporal content of the source calculation. (a) The fixed-number density covariance and its Newtonian projection. (b,c) A matching discrete spectral component and even correlator; sticks denote atom weights. An isolated eigenstate can also have transitions to a continuum. (d) The Gaussian one-dimensional marginal at $t/t_0=0,1,2$, where $t_0=2m_s\sigma^2/\hbar$. (e) The positive kernel $C_{\rm sch}(t,t')/C_{\rm sch}(0,0)=a(v)a(v')\cos[1.75(v-v')]$, with $v=t/t_*$, $a(v)=(1+v^2)^{-3/4}$, arbitrary $t_*>0$ and $t'/t_*=0,1$. (f) Positive even spectral shapes with equal integrated variance and different values at zero. Panels (e,f) are illustrative, not calculated source kernels or finite-volume spectral measures; the physical continuum prescription is given in Section~\ref{sec:nomarkov_cqg} and Appendix~\ref{app:continuum}.}
\label{fig:mechanism_cqg}
\endgroup
\end{figure}

\section{Source spectra and a static-probe dephasing diagnostic}
\label{sec:nomarkov_cqg}

The equal-time covariance does not determine a dephasing rate.
We specify a static-probe construction and examine bound, freely
spreading and continuum source preparations.

\paragraph{Prescribed source and probe.}
Let the source have particle mass $m_s$, reference state $\sigma_s$, and
Hamiltonian $H_s$. It evolves autonomously, without feedback from the
probe. The probe has mass $m_p$ and two fixed branch projectors
$P_j=|j\rangle\langle j|$ at $\mathbf r_{1,2}=\pm\mathbf R/2$.
We prescribe a real, centred classical Gaussian potential
$\xi_\Phi(\mathbf r,t)$ whose covariance is the symmetrised connected
source-potential covariance. The reference preparation and evolution fix
this law independently of the probe input and subsequent local
operations, and the initial probe and noise are statistically independent.
The probe Hamiltonian is
\begin{equation}
 H_{p,\xi}(t)=\sum_{j=1}^2P_j
 \left\{\epsilon_j(t)+m_p\left[
 \bar\Phi_s(\mathbf r_j,t)+\xi_\Phi(\mathbf r_j,t)\right]\right\},
 \label{eq:prescribed_probe}
\end{equation}
where $\epsilon_j$ is a bare branch energy and $\bar\Phi_s$ is the mean
source potential. The fixed projectors exclude probe transitions.
Tunnelling within $H_s$ instead affects the source covariance.
This construction tests a given source kernel. Matching its symmetrised
two-point function does not determine the quantum response or higher
cumulants required for a microscopic source--probe reduction.

Write $g=Gm_sm_p$ and define the source branch-energy difference by
\begin{equation}
 \begin{split}
 \hat A_s(t)&=-g\int d^3x\,a_{\mathbf R}(\mathbf x)
                   \hat n_s(\mathbf x,t),\\
 a_{\mathbf R}(\mathbf x)&=
 \frac{1}{|\mathbf R/2-\mathbf x|}
 -\frac{1}{|-\mathbf R/2-\mathbf x|},\qquad
 \delta\hat A_s(t)=\hat A_s(t)
 -\operatorname{Tr}[\sigma_s\hat A_s(t)]I .
 \end{split}
 \label{eq:A_CQG}
\end{equation}
The stochastic energy difference
$A_\xi(t)=m_p[\xi_\Phi(\mathbf r_1,t)-\xi_\Phi(\mathbf r_2,t)]$
then has covariance
\begin{equation}
 \mathbb E[A_\xi(t)A_\xi(t')]=C_A(t,t')
 =\tfrac12\operatorname{Tr}\!\left[
 \sigma_s\{\delta\hat A_s(t),\delta\hat A_s(t')\}\right].
 \label{eq:CAdef_CQG}
\end{equation}
For finite second moments this covariance is positive, since its
quadratic form is $\operatorname{Tr}(\sigma_sX^2)\geq0$ for
$X=\sum_i c_i\delta\hat A_s(t_i)$ with real $c_i$.
We assume mean-square time integration, ensured on each finite interval
by $\int_0^T\sqrt{C_A(t,t)}\,dt<\infty$.
Source and probe remain distinct when $m_s=m_p=m$.

For each realisation the probe acquires only a relative phase.
Writing it as $\Theta(T)=\bar\Theta(T)+\eta(T)$, define
\begin{align}
\bar\Theta(T)&=\frac1\hbar\int_0^Tdt\,
[\epsilon_1(t)-\epsilon_2(t)+m_p(\bar\Phi_s(\mathbf r_1,t)-\bar\Phi_s(\mathbf r_2,t))],\nonumber\\
\eta(T)&=\frac1\hbar\int_0^T A_\xi(t)\,dt,\qquad \mathbb E\eta=0.
\label{eq:centered_probe_phase}
\end{align}
Gaussian averaging gives
\begin{equation}
 \begin{split}
 \rho_{12}(T)&=\rho_{12}(0)e^{-i\bar\Theta(T)}e^{-\Gamma(T)},\\
 \Gamma(T)&=\tfrac12\operatorname{Var}\eta(T)
 =\frac{1}{2\hbar^2}\int_0^Tdt\int_0^Tdt'\,C_A(t,t'),\\
 \mathbb E\Theta(T)^2&=\bar\Theta(T)^2+2\Gamma(T).
 \end{split}
 \label{eq:Gamma_finite}
\end{equation}
The coherence magnitude depends on the centred phase variance, rather than the total-phase
second moment used in Eq.~(35) of Ref.~\cite{Bera2015}.
A deterministic phase drift or a finite-window phase criterion alone
does not establish a long-time dephasing rate.

\paragraph{Stationary spectra.}
For a stationary source, write $C_A(t,t')=C_A(t-t')$. Its finite, positive,
even spectral measure $\mu_A$ and its two-sided density, wherever the
measure is absolutely continuous, are defined by
\begin{equation}
 C_A(\tau)=\int_{\mathbb R}e^{-i\omega\tau}\mu_A(d\omega),
 \qquad \mu_A(d\omega)=\frac{S_A(\omega)}{2\pi}\,d\omega.
 \label{eq:source_spectral_measure}
\end{equation}
Here $\mu_A(\mathbb R)=C_A(0)$; $C_A$ has units of energy squared and
$S_A$ has units of energy squared times time. The observation window gives
\begin{equation}
 \begin{split}
 \Gamma(T)&=\frac{1}{\hbar^2}\int\mu_A(d\omega)
             \frac{1-\cos(\omega T)}{\omega^2},\\
 \frac{\Gamma(T)}{T}&=\frac{1}{2\hbar^2}
             \int F_T(\omega)\mu_A(d\omega),\qquad
 F_T(\omega)=T\operatorname{sinc}^2(\omega T/2),
 \end{split}
 \label{eq:spectral_filter}
\end{equation}
where $\operatorname{sinc}x=\sin x/x$ and the first integrand at zero is
$T^2/2$. Discrete spectral atoms must be retained in this formula.

\begin{proposition}
Suppose that the source is in an isolated, non-degenerate eigenstate of
$H_s$, separated by $\hbar\omega_g>0$ from the rest of its spectrum,
and that the branch-difference operator has finite variance in that
state. Then its connected symmetrised spectral measure has no support
in $(-\omega_g,\omega_g)$.
\end{proposition}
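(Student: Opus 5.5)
The plan is to compute the spectral measure $\mu_A$ of Eq.~\eqref{eq:source_spectral_measure} directly from the spectral theorem for $H_s$. Let $\sigma_s=|E_0\rangle\langle E_0|$, and write $A=\hat A_s(0)$ and $\hat A_s(t)=e^{iH_st/\hbar}Ae^{-iH_st/\hbar}$. Let $Q=1-|E_0\rangle\langle E_0|$. Then $\delta A\,|E_0\rangle = A|E_0\rangle-\langle A\rangle|E_0\rangle = QA|E_0\rangle$. Finite variance means exactly that the vector $\chi:=QA|E_0\rangle$ lies in the Hilbert space, with $\|\chi\|^2=\operatorname{Var}A$. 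If $A$ is only a quadratic form or unbounded, I will take this as the definition, using the form domain of $\hat n_s$ smeared against $a_{\mathbf R}$.

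Step one computes the unsymmetrised correlator. Using $H_s|E_0\rangle=E_0|E_0\rangle$,
\[
\langle E_0|\delta\hat A_s(t)\delta\hat A_s(t')|E_0\rangle=\langle\chi|e^{-i(H_s-E_0)(t-t')/\hbar}|\chi\rangle .
\]
This depends only on $\tau=t-t'$, so the source is stationary. Let $E(\cdot)$ be the projection-valued spectral measure of $H_s$. Then $\nu(B)=\langle\chi|E(E_0+\hbar B)|\chi\rangle$ is a finite positive measure of total mass $\operatorname{Var}A$. The correlator becomes $\int e^{-i\omega\tau}\nu(d\omega)$.

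Since $Q$ removes the eigenvalue $E_0$ and the state is non-degenerate, $\chi$ lies in the range of $E(\mathbb R\setminus\{E_0\})$. Isolation says that $\sigma(H_s)\cap(E_0-\hbar\omega_g,E_0+\hbar\omega_g)=\{E_0\}$. Hence $\nu$ is supported in $(-\infty,-\omega_g]\cup[\omega_g,\infty)$. In fact, if $E_0$ is the ground state only $[\omega_g,\infty)$ occurs, but the statement does not need this.

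Step two symmetrises. The term with $t,t'$ exchanged equals the complex conjugate, because $\delta A$ is Hermitian. So $C_A(\tau)=\operatorname{Re}\int e^{-i\omega\tau}\nu(d\omega)=\int e^{-i\omega\tau}\mu_A(d\omega)$ with $\mu_A(B)=\tfrac12[\nu(B)+\nu(-B)]$. This is even, positive and finite, and it still vanishes on $(-\omega_g,\omega_g)$. By uniqueness of the Fourier--Stieltjes representation of a continuous positive-definite function (Bochner), this $\mu_A$ is the measure in Eq.~\eqref{eq:source_spectral_measure}. This proves the claim.

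The main obstacle is the domain issue. The operator $\hat A_s$ involves the unbounded density smeared against $a_{\mathbf R}$, and $a_{\mathbf R}$ is singular at $\pm\mathbf R/2$. I must make sure that $A|E_0\rangle$ is a genuine vector, so that the spectral measure $\nu$ is finite and the time-evolved products are well defined. I would handle this by defining $\chi$ as the Riesz vector of the form $\phi\mapsto\langle\phi|QA|E_0\rangle$. This form is bounded exactly under the finite-variance hypothesis. All subsequent steps then use only $\chi$ and the unitary group, so no further regularity is required.

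The continuity of $C_A$ needed for Bochner follows from the strong continuity of $e^{-iH_st/\hbar}$. The exclusion of the atom at $\omega=0$ is what distinguishes the connected measure from the full one, and it comes precisely from the subtraction with $Q$ together with non-degeneracy.
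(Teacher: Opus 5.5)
Your proposal is correct and is essentially the paper's own proof. The paper likewise sets $|v\rangle=\delta\hat A_s|a\rangle$ and takes $\nu(B)=\langle v|P_K(B)|v\rangle$ for $K=(H_s-E_a)/\hbar$. It then symmetrises to $\mu_A(B)=[\nu(B)+\nu(-B)]/2$, and uses the centring together with non-degeneracy and the gap to exclude weight from $(-\omega_g,\omega_g)$. Your treatment of the domain of $\delta\hat A_s|E_0\rangle$, and your identification of $\mu_A$ through uniqueness of Fourier--Stieltjes transforms of finite measures, only make explicit what the paper leaves implicit.
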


\begin{proof}
For $H_s|a\rangle=E_a|a\rangle$, put
$|v\rangle=\delta\hat A_s|a\rangle$ and let $P_K$ be the spectral
resolution of $K=(H_s-E_a)/\hbar$. The spectral theorem gives
$\nu(B)=\langle v|P_K(B)|v\rangle$ and
$\mu_A(B)=[\nu(B)+\nu(-B)]/2$ for every Borel set $B$.
Centring removes the projection onto $|a\rangle$. Non-degeneracy and
the spectral gap exclude all remaining weight from the stated interval.
This argument includes continuous spectral contributions above a
threshold and does not assume that the full spectrum is discrete.
\end{proof}

Within the prescribed Gaussian probe model, the proposition has the
immediate consequence
\begin{equation}
 0\leq\Gamma(T)\leq\frac{2C_A(0)}{\hbar^2\omega_g^2},
 \qquad \lim_{T\to\infty}\frac{\Gamma(T)}T=0.
 \label{eq:zero_CQG}
\end{equation}
Finite-time attenuation is allowed. The bound requires neither decay of $C_A(\tau)$ nor
monotonicity of the coherence magnitude. For comparison, suppose the
\emph{entire} measure in a neighbourhood of zero has a bounded density
continuous at zero, with finite measure outside that neighbourhood.
Since $F_T/(2\pi)$ is an approximate identity, Eq.~\eqref{eq:spectral_filter}
then implies
\begin{equation}
 \lim_{T\to\infty}\frac{\Gamma(T)}T
 =\frac{S_A(0)}{2\hbar^2}.
 \label{eq:regular_spectral_slope}
\end{equation}
This is an asymptotic pure-dephasing slope. An exact time-homogeneous
semigroup would require $\Gamma(T)=\gamma T$ at every time, together
with a time-homogeneous deterministic phase. A zero-frequency atom
$w_0\delta_0$ instead contributes $w_0T^2/(2\hbar^2)$.
The zero-frequency criterion belongs to the static diagonal coupling
in Eq.~\eqref{eq:prescribed_probe}; transition channels can sample
nonzero Bohr frequencies.

For a stationary source doublet,
$\sigma_s=p_+|+\rangle\langle+|+p_-|-\rangle\langle-|$, let
$a_\pm=\langle\pm|\hat A_s|\pm\rangle$,
$w_{\rm t}=|\langle+|\hat A_s|-\rangle|^2$ and
$\omega_{\rm t}=(E_+-E_-)/\hbar\ne0$. Within this two-level truncation,
\begin{equation}
 \begin{split}
 C_A(\tau)&=w_0+w_{\rm t}\cos(\omega_{\rm t}\tau),
 \qquad w_0=p_+p_-(a_+-a_-)^2,\\
 \Gamma(T)&=\frac{w_0T^2}{2\hbar^2}
 +\frac{w_{\rm t}}{\hbar^2\omega_{\rm t}^2}
       [1-\cos(\omega_{\rm t}T)].
 \end{split}
 \label{eq:tunnel_transient}
\end{equation}
The nonzero-frequency term is quadratic at short times and revives after
$2\pi/|\omega_{\rm t}|$. The static term need not revive.
Coherent superpositions of non-degenerate energy eigenstates generally
require $C_A(t,t')$. At exact degeneracy, a state supported in the
eigenspace $P_E$ has the preparation-dependent static weight
$\operatorname{Var}_{\sigma_s}(P_E\hat A_sP_E)$.

\paragraph{A localised free source.}
The unequal-time covariance cannot be reconstructed from the equal-time
density alone. For $H_s=\hat{\mathbf p}^2/(2m_s)$, the propagator for
$t>t'$ is
\begin{equation}
 G_0(\mathbf x,t;\mathbf y,t')
 =\left[\frac{m_s}{2\pi i\hbar(t-t')}\right]^{3/2}
 \exp\!\left[\frac{im_s|\mathbf x-\mathbf y|^2}
                  {2\hbar(t-t')}\right].
 \label{eq:freeprop_CQG}
\end{equation}
For a normalised one-particle wave function $\Psi$ and $p=|\Psi|^2$,
\begin{equation}
 \begin{split}
 \langle\hat n_s(\mathbf x,t)\hat n_s(\mathbf y,t')\rangle
 &=G_0(\mathbf x,t;\mathbf y,t')\Psi^*(\mathbf x,t)\Psi(\mathbf y,t'),\\
 C_n(\mathbf x,t;\mathbf y,t')
 &=\operatorname{Re}[G_0\Psi^*(\mathbf x,t)\Psi(\mathbf y,t')]
       -p(\mathbf x,t)p(\mathbf y,t').
 \end{split}
 \label{eq:free_cn_CQG}
\end{equation}
The equal-time limit is Eq.~\eqref{eq:K1_CQG}, in the distributional
sense. Integrating $C_n$ against
$g^2a_{\mathbf R}(\mathbf x)a_{\mathbf R}(\mathbf y)$ gives $C_A$.

Take an initially unchirped isotropic Gaussian centred at the branch
midpoint, with zero mean momentum and variance $\sigma^2$ per Cartesian
direction. Its density width is
$\sigma_t=\sigma[1+(t/t_0)^2]^{1/2}$, where
$t_0=2m_s\sigma^2/\hbar$. Parity gives $\langle\hat A_s(t)\rangle=0$.
For $R=|\mathbf R|>0$, the exact spatial identity
$\int d^3x\,a_{\mathbf R}(\mathbf x)^2=4\pi R$ yields
\begin{equation}
 \begin{split}
 V_A(t)&\equiv\operatorname{Var}_t(\hat A_s)
 \leq4\pi g^2R(2\pi\sigma_t^2)^{-3/2},\\
 V_A(t)&=4\pi g^2R(2\pi\sigma_t^2)^{-3/2}[1+o(1)]
 \quad(t\to\infty).
 \end{split}
 \label{eq:free_variance_asymptotic}
\end{equation}
The bound holds at all times; dominated convergence gives the
asymptotic equality. Since $|C_A(t,t')|\leq\sqrt{V_A(t)V_A(t')}$ and
$\int_0^\infty\sqrt{V_A(t)}\,dt<\infty$, the covariance is absolutely
integrable over the positive time quadrant. Consequently,
\begin{equation}
 \Gamma(T)\longrightarrow\Gamma_\infty\in[0,\infty),
 \qquad \Gamma(T)/T\longrightarrow0.
 \label{eq:free_finite_limit}
\end{equation}
The exponent need not approach its limit monotonically. Nonstationarity
or saturation alone does not establish completely positive
indivisibility or information backflow.

\paragraph{Stationary continuum preparations.}
A continuum limit requires a separate normalisation and limiting
prescription. In a periodic cube of volume $V=L^3$, take
$\langle\mathbf x|\mathbf k\rangle=V^{-1/2}e^{i\mathbf k\cdot\mathbf x}$,
$\mathbf k\in(2\pi/L)\mathbb Z^3$, and a momentum-diagonal state
$\sigma_{s,V}=\sum_{\mathbf k}P_{\mathbf k,V}
|\mathbf k\rangle\langle\mathbf k|$, with
$\sum_{\mathbf k}P_{\mathbf k,V}=1$. This specifies the preparation
also within degenerate energy eigenspaces. For a zero-mean periodic
operator with $A_{\mathbf k'\mathbf k}=
\widetilde A(\mathbf k'-\mathbf k)/V$ and $\widetilde A(0)=0$,
\begin{equation}
 \mu_{A,V}(d\omega)=\frac{1}{2V^2}
 \sum_{\mathbf k,\mathbf k'}P_{\mathbf k,V}
 |\widetilde A(\mathbf k'-\mathbf k)|^2
 [\delta_{\Delta\omega_{\mathbf k'\mathbf k}}(d\omega)
 +\delta_{-\Delta\omega_{\mathbf k'\mathbf k}}(d\omega)],
 \label{eq:box_spectral_measure}
\end{equation}
where $\Delta\omega_{\mathbf k'\mathbf k}=
(E_{\mathbf k'}-E_{\mathbf k})/\hbar$ and
$E_{\mathbf k}=\hbar^2k^2/(2m_s)$. Distinct momenta of equal energy
can produce a zero-frequency atom, rather than a regular spectral density.
Equation~\eqref{eq:spectral_filter} resolves frequencies on a scale
$1/T$. We take the large-box limit at fixed $T$ and fixed infrared
prescription before extracting a long-time slope.

For a smooth normalised momentum density $f$ with controlled tails,
$P_{\mathbf k,V}\simeq(2\pi)^3f(\mathbf k)/V$. The resulting leading
continuum coefficient is
\begin{equation}
 S_{A,V}^{\rm cont}(0)=\frac{2\pi\hbar}{V(2\pi)^3}
 \int d^3k\,f(\mathbf k)\int d^3k'\,
 |\widetilde A(\mathbf k'-\mathbf k)|^2
 \delta(E_{\mathbf k'}-E_{\mathbf k}).
 \label{eq:continuum_CQG}
\end{equation}
This is not the density of the exact finite-box measure.
Its $1/V$ factor vanishes for a single particle in the dilution limit.
For independent additive sources, the different limit $N/V\to n$
replaces $1/V$ by the number density $n$; general quantum statistics or
interparticle correlations require their own density correlator.
A specified incident flux likewise supplies additional preparation data.

For a directional preparation $\mathbf k=k\hat{\mathbf z}$, $k>0$,
choose the regulated Newtonian operator
\begin{equation}
 \widetilde A_{\kappa,\ell}(\mathbf q)
 =8\pi i g\,
 \frac{\sin(\mathbf q\cdot\mathbf R/2)}{q^2+\kappa^2}
 e^{-\ell^2q^2/2},\qquad \kappa>0.
 \label{eq:continuum_regulator}
\end{equation}
Here $\kappa$ specifies an infrared prescription and $\ell\geq0$
spatial smoothing; no physical screening mechanism is assumed.
Writing $R_\perp$ for the separation transverse to $\mathbf k$ and
$v=\hbar k/m_s$, the elastic shell gives
\begin{equation}
 S_{A,V}^{\rm cont}(0\mid\mathbf k)
 =\frac{4\pi g^2R_\perp^2}{Vv}\ln\frac{2k}{\kappa}+O(1)
 \qquad(\kappa\downarrow0).
 \label{eq:continuum_IR_log}
\end{equation}
Parallel separation removes this logarithm, although the unscreened
and zero-frequency limits need not commute.
Appendix~\ref{app:continuum} derives the finite-box limit, the angular
dependence, and the additional forward contribution in the parallel case.
These number-conserving, non-relativistic examples therefore depend on
preparation, normalisation, geometry and the limiting prescription.

\section{No-signalling for fixed stochastic channels}\label{sec:nosig_cqg}

The origin of a noise kernel and the no-signalling property of a prescribed stochastic channel are logically separate. The following statements assume that the classical noise process is initially independent of the quantum systems and fixed independently of local choices, and that local operations are non-selective and trace preserving. For white noise, let $D(\mathbf r,\mathbf r')$ be the coefficient of $\delta(t-t')$ in the potential covariance, and diagonalise this positive kernel as $D(\mathbf r,\mathbf r')=\sum_\alpha\lambda_\alpha f_\alpha(\mathbf r)f_\alpha(\mathbf r')$. For systems in disjoint regions $\mathcal R_A$ and $\mathcal R_B$, the coupling is additive,
\begin{equation}
\hat L_\alpha^{\rm tot}=\hat L_\alpha^{(A)}\otimes I_B+I_A\otimes\hat L_\alpha^{(B)}.
\end{equation}
With a deterministic Hamiltonian that is also additive and local to the two subsystems, the Markovian dephasing dissipator is~\cite{Breuer2002,Lindblad1976,Gorini1976}
\begin{equation}
\mathcal D_{AB}\rho_{AB}=-\frac{1}{2\hbar^2}\sum_\alpha\lambda_\alpha
[\hat L_\alpha^{\rm tot},[\hat L_\alpha^{\rm tot},\rho_{AB}]].
\label{eq:DAB_CQG}
\end{equation}
A cross term obeys
\begin{align}
\operatorname{Tr}_A[\hat L_A\otimes I,[I\otimes\hat L_B,\rho_{AB}]]
&=[\hat L_B,\operatorname{Tr}_A(\hat L_A\rho_{AB}-\rho_{AB}\hat L_A)]=0,
\end{align}
because the partial trace is cyclic for Alice-local operators. Hence
\begin{equation}
\operatorname{Tr}_A\mathcal D_{AB}\rho_{AB}=\mathcal D_B\rho_B.
\label{eq:nosig_CQG}
\end{equation}
Bob's unconditional reduced dynamics is unchanged by a local completely positive trace-preserving map $\mathcal E_A$ on Alice's subsystem. The same identity follows from a Kraus representation of $\mathcal E_A$.

The same argument applies to the prescribed coloured noise of Section~\ref{sec:nomarkov_cqg}. If each realisation has the additive Hamiltonian
\begin{equation}
H_\xi(t)=H_{A,\xi}(t)\otimes I_B+I_A\otimes H_{B,\xi}(t),
\end{equation}
then $U_\xi=U_{A,\xi}\otimes U_{B,\xi}$ and
\begin{equation}
\rho_B(t)=\mathbb E_\xi\left[U_{B,\xi}(t)\rho_B(0)U_{B,\xi}^\dagger(t)\right].
\end{equation}
No Markov approximation is required. These statements concern prescribed linear channels. They do not cover deterministic Schr\"odinger--Newton nonlinearity or a self-consistently updated nonlocal kernel $C[\rho]$, which require a linear ensemble-level completion, a measurement-feedback construction, or covariantly retarded dynamics~\cite{Kafri2014,Tilloy2016,Tilloy2017,Oppenheim2023}.

\section{Fixed-number many-body covariance}\label{sec:manybody_cqg}

For a fixed-$N$ bosonic state, define
\begin{align}
n_1(\mathbf x)&=\langle \hat n(\mathbf x)\rangle,\nonumber\\
n_2(\mathbf x,\mathbf y)&=\langle\hat\psi^\dagger(\mathbf x)\hat\psi^\dagger(\mathbf y)
\hat\psi(\mathbf x)\hat\psi(\mathbf y)\rangle.
\end{align}
The exact connected covariance is
\begin{equation}
K_N(\mathbf x,\mathbf y)
=n_1(\mathbf x)\delta^{(3)}(\mathbf x-\mathbf y)
+n_2(\mathbf x,\mathbf y)-n_1(\mathbf x)n_1(\mathbf y).
\label{eq:KN_CQG}
\end{equation}
Since $\int n_1=N$ and $\int d^3y\,n_2(\mathbf x,\mathbf y)=(N-1)n_1(\mathbf x)$, every fixed-$N$ state satisfies
\begin{equation}
\int d^3y\,K_N(\mathbf x,\mathbf y)=0,\qquad\int d^3x\,d^3y\,K_N(\mathbf x,\mathbf y)=0.
\end{equation}
For the Hartree state $|N:\psi\rangle=(\hat a_\psi^\dagger)^N|0\rangle/\sqrt{N!}$,
\begin{equation}
K_N^{\rm H}(\mathbf x,\mathbf y)
=N\left[p(\mathbf x)\delta^{(3)}(\mathbf x-\mathbf y)-p(\mathbf x)p(\mathbf y)\right].
\label{eq:KH_CQG}
\end{equation}
The Hartree source covariance grows linearly with $N$. Collective branch-occupation correlations can change this scaling. Let $u_L,u_R$ be normalised orbitals with disjoint spatial supports, $p_\alpha=|u_\alpha|^2$, and $|\alpha_N\rangle=(a_\alpha^\dagger)^N|0\rangle/\sqrt{N!}$. Compare the number cat and the equal-weight mixture
\begin{equation}
|C_\varphi\rangle=\frac{|L_N\rangle+e^{i\varphi}|R_N\rangle}{\sqrt2},\qquad
\rho_M=\frac12\left(|L_N\rangle\langle L_N|+|R_N\rangle\langle R_N|\right).
\label{eq:cat_and_mixture}
\end{equation}
For every $N\ge1$, their full equal-time covariances coincide:
\begin{align}
K_C(\mathbf x,\mathbf y)=K_M(\mathbf x,\mathbf y)
={}&\frac N2[p_L(\mathbf x)+p_R(\mathbf x)]\delta^{(3)}(\mathbf x-\mathbf y)\nonumber\\
&-\frac N2[p_L(\mathbf x)p_L(\mathbf y)+p_R(\mathbf x)p_R(\mathbf y)]\nonumber\\
&+\frac{N^2}{4}\Delta p(\mathbf x)\Delta p(\mathbf y),\qquad\Delta p=p_L-p_R.
\label{eq:NOON_covariance}
\end{align}
The contact term requires the full field commutator, including unoccupied modes. The mass-density covariance is $m_s^2K$, and both coordinate integrals vanish. The $N^2$ term is already present in the mixture and reflects collective branch occupation. It is not a witness of branch coherence. This equality assumes disjoint supports and concerns equal times. Subsequent source dynamics can make coherence observable in unequal-time correlations.

A total-mass description can capture the leading equal-time scaling for this collective occupation coordinate, but it does not determine its temporal spectrum, internal form factor or spatial coupling. Replacing $m_s$ by $Nm_s$ in a single-particle dephasing rate is therefore not justified by Eq.~\eqref{eq:NOON_covariance}.

\section{Discussion}\label{sec:discussion_cqg}

The results separate source spectral properties, comparison of potential kernels and construction of a matter channel. The spectral-gap result gives a bounded probe exponent only for the specified independent static probe and prescribed Gaussian noise law. The localised one-particle potential covariance depends on the source preparation and differs from the spatial and temporal law postulated in the DP model. Neither statement is a microscopic derivation of the original matter system's self-decoherence.

A stationary doublet may carry a zero-frequency atom and a tunnel line, with different long-time behaviour. The specified freely spreading Gaussian has an absolutely integrable two-time branch covariance and a finite limiting probe exponent. Continuum results depend on density, momentum distribution, branch orientation, infrared prescription and limiting order. This spectral classification builds on the connected and nonwhite correlations obtained in Ref.~\cite{Bera2015}.

Particle-number conservation alone does not prohibit Markovian dynamics. For bounded Hermitian $F$ with $[F,\hat N]=[H,\hat N]=0$, the generator
\begin{equation}
\mathcal L\rho=-\frac{i}{\hbar}[H,\rho]-\frac{\kappa_F}{2}[F,[F,\rho]],\qquad\kappa_F\ge0,
\label{eq:number_conserving_generator}
\end{equation}
preserves every fixed-number sector. The spectral-gap statement therefore cannot be extended to all number-conserving non-relativistic preparations. The long-time slope in the static diagnostic must be distinguished from a finite-window fit and an exact time-homogeneous semigroup.

A full EL-to-DP matching requires four further steps. The source and response must yield the required temporal spectrum, including its low-frequency behaviour. Their spatial projection must have the stipulated form, translation properties, spatial smearing and degree of preparation independence. The noise law must be specified as fixed, externally prepared or dynamically updated. The coupling and approximation must also justify the reduced matter channel, including response effects not determined by a symmetrised covariance alone. Antiparticles, relativistic pair and vacuum contributions, subleading stress--energy corrections, radiative metric modes and retarded backreaction lie outside the present leading-source calculation. Thermal or other continuum preparations within non-relativistic physics can also alter the spectrum.

Bounds on DP-type noise constrain that specified phenomenological channel. An equal-time EL covariance alone does not predict the same observable. A composite-body prediction additionally needs the many-body state, internal form factor, centre-of-mass width and temporal correlations. The conditional cutoff and width estimate in Appendix~\ref{app:uv_markov} is not a rigid-body formula. The fixed-channel no-signalling proof likewise applies to operation-independent noise and local product evolution at the unconditional ensemble level. It does not establish relativistic causality for an instantaneous interaction or cover a self-consistently updated nonlocal kernel.

\section{Conclusion}\label{sec:conclusion_cqg}

For the specified closed non-relativistic sources, the fixed-number density covariance contains a contact term and a connected subtraction that keeps the total mass constant. Its Newtonian projection and temporal spectrum depend on the source preparation. Equal-time collective occupation correlations can scale as $N^2$ without distinguishing a coherent number cat from its incoherent branch mixture.

An isolated non-degenerate source eigenstate with finite variance and a positive spectral gap gives a bounded dephasing exponent in the separately defined static Gaussian probe diagnostic. The specified freely spreading Gaussian also has a finite limiting exponent, while doublets and continuum preparations require their own temporal and limiting analysis. These results show why nonzero equal-time stress--energy variance does not by itself establish a universal Di\'osi--Penrose matter channel. The vanishing long-time slope applies under the stated source and probe assumptions, rather than throughout number-conserving non-relativistic physics.

\appendix

\section{Two-peak Gaussian covariance}\label{app:gaussian}

This appendix concerns a one-particle two-peak density, distinct from the many-body number cat in Section~\ref{sec:manybody_cqg}. Let
\begin{equation}
p_0(\mathbf u)=(2\pi\sigma^2)^{-3/2}\exp[-u^2/(2\sigma^2)]
\end{equation}
be a normalised three-dimensional Gaussian. The radial integrals required by Eq.~\eqref{eq:Cphi_final_CQG} are
\begin{align}
\int d^3u\,\frac{p_0(\mathbf u)}{|\mathbf u|}
&=\sqrt{\frac{2}{\pi}}\frac{1}{\sigma},\nonumber\\
\int d^3u\,\frac{p_0(\mathbf u)}{|\mathbf u|^2}
&=\frac{1}{\sigma^2}.
\label{eq:Gaussian_integrals}
\end{align}
For
\begin{equation}
p(\mathbf x)=\frac12\left[p_0(\mathbf x-\mathbf R/2)
+p_0(\mathbf x+\mathbf R/2)\right],
\qquad R\gg\sigma,
\end{equation}
the first terms of the large-separation expansion are
\begin{align}
C_\Phi(\mathbf R/2,\mathbf R/2)
&=G^2m_s^2\left[
\frac{\pi-1}{2\pi\sigma^2}
-\sqrt{\frac{2}{\pi}}\frac{1}{2\sigma R}
+\frac{1}{4R^2}+\cdots\right],\nonumber\\
C_\Phi(\mathbf R/2,-\mathbf R/2)
&=G^2m_s^2\left[
-\frac{1}{2\pi\sigma^2}
+\sqrt{\frac{2}{\pi}}\frac{1}{2\sigma R}
-\frac{1}{4R^2}+\cdots\right].
\label{eq:Gaussian_expansion}
\end{align}
Consequently,
\begin{equation}
\Delta C_h=G^2m_s^2\left[
\frac{1}{2\sigma^2}
-\sqrt{\frac{2}{\pi}}\frac{1}{\sigma R}
+\frac{1}{2R^2}+\cdots\right].
\label{eq:DeltaC_expansion}
\end{equation}
The leading term is independent of $R$ because the self contribution is controlled by the packet width. The displayed corrections quantify the approach to the $R\gg\sigma$ limit used in Appendix~\ref{app:uv_markov}.

\section{An imposed temporal spectrum and its long-time slope}\label{app:uv_markov}

This appendix imposes a temporal spectrum independently of the source dynamics. The resulting estimate depends on additional temporal and width assumptions, not on the spectral-gap proposition. Consider
\begin{align}
S_\xi(\mathbf r,\mathbf r';\omega)&=\frac{C_\Phi(\mathbf r,\mathbf r')}{\omega_c}F(\omega/\omega_c),\nonumber\\
F(\Omega)&\ge0,\qquad F(-\Omega)=F(\Omega),\qquad
\int\frac{d\Omega}{2\pi}F(\Omega)=1.
\label{eq:Smodel_CQG}
\end{align}
Let $h(\tau)=\int d\Omega\,F(\Omega)e^{-i\omega_c\Omega\tau}/(2\pi)$, so that $h(0)=1$. If $F$ is integrable and continuous at zero, the filtered long-time limit is
\begin{equation}
\tau_{\rm eff}:=\lim_{T\to\infty}\int_0^T\left(1-\frac{\tau}{T}\right)h(\tau)d\tau
=\frac{F(0)}{2\omega_c}.
\label{eq:taueff_CQG}
\end{equation}
It agrees with the ordinary one-sided correlation integral when that integral exists. Choosing $\omega_c=m_sc^2/\hbar$ is an extra cutoff assumption, not a consequence of the retained covariance. The flat-band choice $F(\Omega)=\pi\Theta(1-|\Omega|)$ gives $\tau_{\rm eff}=\pi/(2\omega_c)$. Its finite bandwidth gives a coloured correlator. The following coefficient describes the long-time slope, not exact white-noise dynamics at every time.

For the two-peak one-particle Gaussian density in Appendix~\ref{app:gaussian}, with $R\gg\sigma$, Eqs.~\eqref{eq:Gaussian_expansion} and \eqref{eq:DeltaC_expansion} give
\begin{align}
\Delta C_h&=C_\Phi(\mathbf R/2,\mathbf R/2)-C_\Phi(\mathbf R/2,-\mathbf R/2)
\simeq\frac{G^2m_s^2}{2\sigma^2},\nonumber\\
D_\Phi(0)&=2\Delta C_h,\qquad
\gamma_{\rm cut}=\frac{m_p^2}{\hbar^2}\tau_{\rm eff}D_\Phi(0)
\simeq\frac{\pi G^2m_sm_p^2}{2c^2\hbar\sigma^2}.
\label{eq:gammaUV_CQG}
\end{align}
Here $D_\Phi(0)$ is the equal-time branch-difference potential variance; the probe mass enters when it is converted to an energy variance. If one also inserts the Gaussian Schr\"odinger--Newton variational width~\cite{Moroz1998,Giulini2012},
\begin{equation}
\sigma_0=\frac{3\sqrt\pi\hbar^2}{2Gm_s^3},\qquad
\gamma_{\rm width}\simeq\frac{2G^4m_s^7m_p^2}{9c^2\hbar^5},
\label{eq:m9_CQG}
\end{equation}
then the equal-mass specialisation is proportional to $m^9$. This power follows only after imposing both the cutoff and the width. Neither is derived from the source-spectrum proposition, and the estimate does not apply to ordinary composite bodies by a total-mass substitution.

\section{Continuum normalisation and infrared limits}
\label{app:continuum}

The periodic operator underlying Eq.~\eqref{eq:box_spectral_measure} is
defined by the Fourier convention
\begin{equation}
 A_{V,\kappa,\ell}(\mathbf x)=\frac1V
 \sum_{\mathbf q\in(2\pi/L)\mathbb Z^3}
 \widetilde A_{\kappa,\ell}(\mathbf q)e^{i\mathbf q\cdot\mathbf x},
 \qquad \widetilde A_{\kappa,\ell}(0)=0,
 \label{eq:periodic_operator}
\end{equation}
with Eq.~\eqref{eq:continuum_regulator}. This fixes the periodic
boundary prescription and the overall sign corresponding to
Eq.~\eqref{eq:A_CQG}; the sign drops out of the covariance.
At fixed $\kappa>0$, the large-box limit eventually has $\kappa L\gg1$.
For a smooth momentum density, the finite-time Riemann sum gives
\begin{equation}
 \begin{split}
 \mathcal W_\kappa(T)&:=
 \lim_{V\to\infty}\frac{V\Gamma_{V,\kappa}(T)}T\\
 &=\frac{1}{2\hbar^2(2\pi)^3}
 \int d^3k\,f(\mathbf k)\int d^3q\,
 |\widetilde A_{\kappa,\ell}(\mathbf q)|^2
 F_T\!\left[\frac{\hbar}{2m_s}
                 (q^2+2\mathbf k\cdot\mathbf q)\right].
 \end{split}
 \label{eq:continuum_finite_time}
\end{equation}
The same expression with fixed $\mathbf k$ defines a directional
monoenergetic sequence of box-normalised states, not a normalisable infinite-volume
one-particle state. The factor $V$ extracts the dilute one-particle
coefficient. Independent additive sources with $N/V\to n$ have
$\Gamma(T)/T=n\mathcal W_\kappa(T)$ in that limit.

At fixed $L$, free energy differences are integer multiples of
$\Omega_L=2\pi^2\hbar/(m_sL^2)$. If $w_{0,V}$ is the zero-frequency
atom and the total variance is finite, then
\begin{equation}
 \begin{split}
 \Gamma_V(T)&=\frac{w_{0,V}T^2}{2\hbar^2}
                      +\Gamma_{V,\ne0}(T),\\
 0\leq\Gamma_{V,\ne0}(T)&\leq
 \frac{2\mu_{A,V}(\mathbb R\setminus\{0\})}{\hbar^2\Omega_L^2}.
 \end{split}
 \label{eq:box_long_time}
\end{equation}
The nonzero-frequency contribution is periodic with period
$2\pi/\Omega_L$. A fixed box therefore gives either a quadratic static
term or a vanishing long-time slope. The finite-time filter resolves
frequencies on a scale of order $1/T$; the continuum limit in
Eq.~\eqref{eq:continuum_finite_time} is taken at fixed $T$ before
increasing the observation time.

\paragraph{Elastic shell and direction.}
For incident $\mathbf k=k\hat{\mathbf z}$, write
$\mathbf R=\mathbf R_\perp+R_\parallel\hat{\mathbf z}$,
$q=2k\sin(\theta/2)$, and $Q=2k$. The azimuthal integral of
$\sin^2(\mathbf q\cdot\mathbf R/2)$ equals $\pi D(q)$, where
\begin{equation}
 D(q)=1-\cos\!\left(\frac{R_\parallel q^2}{2k}\right)
 J_0\!\left(R_\perp q\sqrt{1-\frac{q^2}{4k^2}}\right),
 \label{eq:angular_shell_factor}
\end{equation}
and $J_0$ is the Bessel function. The denominator prescription in
Eq.~\eqref{eq:continuum_regulator} gives
\begin{equation}
 S_{A,V}^{\rm cont}(0\mid\mathbf k)=\frac{16\pi g^2}{Vv}\int_0^Q
 \frac{q e^{-\ell^2q^2}D(q)}{(q^2+\kappa^2)^2}\,dq .
 \label{eq:elastic_shell_integrals}
\end{equation}
Since $e^{-\ell^2q^2}D(q)=R_\perp^2q^2/4+O(q^4)$ and
$\int_0^Q q^3(q^2+\kappa^2)^{-2}dq=\ln(Q/\kappa)+O(1)$ as $\kappa\downarrow0$,
the logarithmic coefficient is $C_{\log}=4\pi g^2R_\perp^2/(Vv)$,
as in Eq.~\eqref{eq:continuum_IR_log}.
Spatial smoothing leaves the leading infrared coefficient unchanged.
The elastic shell is bounded in $q$, so its upper endpoint is finite
even at $\ell=0$. For $R_\perp=0$, $D(q)=O(q^4)$ and the
unscreened elastic integral converges. At $\ell=0$, this regulated-first
elastic limit is
\begin{equation}
 S_{\parallel,\rm el}(0)=\frac{4\pi g^2R}{Vvk}
 \left[\operatorname{Si}(2kR)
 -\frac{1-\cos(2kR)}{2kR}\right],
 \label{eq:parallel_elastic}
\end{equation}
where $\operatorname{Si}(z)=\int_0^z\sin u\,du/u$.

\paragraph{Forward contribution away from the elastic shell.}
For parallel separation and sufficiently small nonzero $|\omega|$,
both shells $K_s=(k^2+2m_s s\omega/\hbar)^{1/2}$, $s=\pm1$, exist.
The symmetrised spectrum is
\begin{equation}
 VS_\parallel(\omega;\kappa)=\frac{8\pi g^2}{v}
 \sum_{s=\pm1}\int_{(K_s-k)^2}^{(K_s+k)^2}
 \frac{e^{-\ell^2x}
 \sin^2\!\left[\frac R2\left(\frac{s\omega}{v}
                                  -\frac{x}{2k}\right)\right]}
 {(x+\kappa^2)^2}\,dx .
 \label{eq:parallel_offshell}
\end{equation}
Put $\delta=|\omega|/v$. In the shrinking forward region $x\sim\delta^2$,
the leading integral is proportional to
$\delta^2\int_{\delta^2}^{\eta}(x+\kappa^2)^{-2}dx$.
Taking $\kappa/\delta\to r$ before $\eta\downarrow0$ gives
$(1+r^2)^{-1}$, while the outer region gives the elastic contribution.
Consequently,
\begin{equation}
 VS_\parallel(\omega;\kappa)\longrightarrow
 \mathcal S_{\rm el}+\frac{4\pi g^2R^2}{v(1+r^2)},
 \qquad \mathcal S_{\rm el}:=
 \lim_{\kappa\downarrow0}VS_\parallel(0;\kappa).
 \label{eq:parallel_forward_limit}
\end{equation}
After the continuum limit, the corresponding ordered limits are
\begin{equation}
 \begin{split}
 \lim_{\kappa\downarrow0}\lim_{T\to\infty}\mathcal W_\kappa(T)
 &=\frac{\mathcal S_{\rm el}}{2\hbar^2},\\
 \lim_{T\to\infty}\lim_{\kappa\downarrow0}\mathcal W_\kappa(T)
 &=\frac{\mathcal S_{\rm el}}{2\hbar^2}
    +\frac{2\pi g^2R^2}{v\hbar^2}.
 \end{split}
 \label{eq:parallel_ordered_limits}
\end{equation}
A density may be changed at one point without changing its measure.
The unscreened time filter samples the local continuous extension, not
an isolated assignment at exactly $\omega=0$. The finite-time
unscreened limit exists because the total branch-difference variance
is finite. Nevertheless, regulator removal and long-time extraction
remain distinct even when the elastic integral is finite.

\begin{acknowledgments}
This work is supported by the National Key R\&D Program of China under Grants No. 2025YFE0217300 and No. 2024YFC2206902.
\end{acknowledgments}

\section*{Data availability statement}
The analytical expressions needed to reproduce the results are given in the article. Figure-generation code, numerical verification scripts and reference outputs are supplied as supplementary material.

\section*{Conflict of interest}
The authors declare no competing interests.


\begin{thebibliography}{50}

\bibitem{Moller1962} C. M{\o}ller, in \textit{Les Th\'eories Relativistes de la Gravitation} (CNRS, Paris, 1962).
\bibitem{Rosenfeld1963} L. Rosenfeld, Nucl. Phys. \textbf{40}, 353 (1963). \url{https://doi.org/10.1016/0029-5582(63)90279-7}
\bibitem{Kiefer1994} C. Kiefer, ``The semiclassical approximation to quantum gravity,'' in \textit{Canonical Gravity: From Classical to Quantum}, Lecture Notes in Physics Vol. 434, edited by J. Ehlers and H. Friedrich (Springer, Berlin, Heidelberg, 1994), pp. 170--212. \url{https://doi.org/10.1007/3-540-58339-4_19}
\bibitem{Diosi1984} L. Di\'osi, Phys. Lett. A \textbf{105}, 199 (1984). \url{https://doi.org/10.1016/0375-9601(84)90397-9}
\bibitem{Penrose1996} R. Penrose, Gen. Relativ. Gravit. \textbf{28}, 581 (1996). \url{https://doi.org/10.1007/BF02105068}
\bibitem{Giulini2012} D. Giulini and A. Gro{\ss}ardt, Class. Quantum Grav. \textbf{29}, 215010 (2012). \url{https://doi.org/10.1088/0264-9381/29/21/215010}
\bibitem{Moroz1998} I. M. Moroz, R. Penrose, and P. Tod, Class. Quantum Grav. \textbf{15}, 2733 (1998). \url{https://doi.org/10.1088/0264-9381/15/9/019}
\bibitem{Giulini2014} D. Giulini and A. Gro{\ss}ardt, New J. Phys. \textbf{16}, 075005 (2014). \url{https://doi.org/10.1088/1367-2630/16/7/075005}
\bibitem{Anastopoulos2014} C. Anastopoulos and B. L. Hu, New J. Phys. \textbf{16}, 085007 (2014). \url{https://doi.org/10.1088/1367-2630/16/8/085007}
\bibitem{Hu2000} W. Hu, R. Barkana, and A. Gruzinov, Phys. Rev. Lett. \textbf{85}, 1158 (2000). \url{https://doi.org/10.1103/PhysRevLett.85.1158}
\bibitem{Guzman2004} F. S. Guzm\'an and L. A. Ure\~na-L\'opez, Phys. Rev. D \textbf{69}, 124033 (2004). \url{https://doi.org/10.1103/PhysRevD.69.124033}
\bibitem{Kopp2017} M. Kopp, K. Vattis, and C. Skordis, Phys. Rev. D \textbf{96}, 123532 (2017). \url{https://doi.org/10.1103/PhysRevD.96.123532}
\bibitem{Bekenstein2015} R. Bekenstein et al., Nat. Phys. \textbf{11}, 872 (2015). \url{https://doi.org/10.1038/nphys3451}
\bibitem{Roger2016} T. Roger et al., Nat. Commun. \textbf{7}, 13492 (2016). \url{https://doi.org/10.1038/ncomms13492}
\bibitem{Paz2025} O. Paz et al., Nat. Commun. \textbf{16}, 4113 (2025). \url{https://doi.org/10.1038/s41467-025-59384-z}
\bibitem{Bahrami2014} M. Bahrami, A. Gro{\ss}ardt, S. Donadi, and A. Bassi, New J. Phys. \textbf{16}, 115007 (2014). \url{https://doi.org/10.1088/1367-2630/16/11/115007}
\bibitem{Bielinska2025} M. E. Bieli\'nska, M. Eckstein, and P. Horodecki, New J. Phys. \textbf{27}, 053002 (2025). \url{https://doi.org/10.1088/1367-2630/adcae1}
\bibitem{Diosi2007} L. Di\'osi, J. Phys. A \textbf{40}, 2989 (2007). \url{https://doi.org/10.1088/1751-8113/40/12/S07}
\bibitem{Bassi2003} A. Bassi and G. C. Ghirardi, Phys. Rep. \textbf{379}, 257 (2003). \url{https://doi.org/10.1016/S0370-1573(03)00103-0}
\bibitem{Bassi2013} A. Bassi et al., Rev. Mod. Phys. \textbf{85}, 471 (2013). \url{https://doi.org/10.1103/RevModPhys.85.471}
\bibitem{Bassi2017} A. Bassi, A. Gro{\ss}ardt, and H. Ulbricht, Class. Quantum Grav. \textbf{34}, 193002 (2017). \url{https://doi.org/10.1088/1361-6382/aa864f}
\bibitem{Nimmrichter2014} S. Nimmrichter, K. Hornberger, and K. Hammerer, Phys. Rev. Lett. \textbf{113}, 020405 (2014). \url{https://doi.org/10.1103/PhysRevLett.113.020405}
\bibitem{Carlesso2022} M. Carlesso et al., Nat. Phys. \textbf{18}, 243 (2022). \url{https://doi.org/10.1038/s41567-021-01489-5}
\bibitem{Donadi2021} S. Donadi et al., Nat. Phys. \textbf{17}, 74 (2021). \url{https://doi.org/10.1038/s41567-020-1008-4}
\bibitem{Vinante2020} A. Vinante et al., Phys. Rev. Lett. \textbf{125}, 100404 (2020). \url{https://doi.org/10.1103/PhysRevLett.125.100404}
\bibitem{Fadel2023} M. Fadel, arXiv:2305.04780 (2023). \url{https://doi.org/10.48550/arXiv.2305.04780}
\bibitem{Tilloy2019} A. Tilloy, J. Phys. Conf. Ser. \textbf{1275}, 012006 (2019). \url{https://doi.org/10.1088/1742-6596/1275/1/012006}
\bibitem{Asprea2021} L. Asprea, G. Gasbarri, and A. Bassi, Phys. Rev. D \textbf{103}, 104041 (2021). \url{https://doi.org/10.1103/PhysRevD.103.104041}
\bibitem{Anastopoulos2022} C. Anastopoulos and B. L. Hu, AVS Quantum Sci. \textbf{4}, 015602 (2022). \url{https://doi.org/10.1116/5.0077536}
\bibitem{Kafri2014} D. Kafri, J. M. Taylor, and G. J. Milburn, New J. Phys. \textbf{16}, 065020 (2014). \url{https://doi.org/10.1088/1367-2630/16/6/065020}
\bibitem{Tilloy2016} A. Tilloy and L. Di\'osi, Phys. Rev. D \textbf{93}, 024026 (2016). \url{https://doi.org/10.1103/PhysRevD.93.024026}
\bibitem{Tilloy2017} A. Tilloy and L. Di\'osi, Phys. Rev. D \textbf{96}, 104045 (2017). \url{https://doi.org/10.1103/PhysRevD.96.104045}
\bibitem{Anastopoulos2013} C. Anastopoulos and B. L. Hu, Class. Quantum Grav. \textbf{30}, 165007 (2013). \url{https://doi.org/10.1088/0264-9381/30/16/165007}
\bibitem{Bera2015} S. Bera, R. Mohan, and T. P. Singh, Phys. Rev. D \textbf{92}, 025054 (2015). \url{https://doi.org/10.1103/PhysRevD.92.025054}
\bibitem{Giovannini2024} M. Giovannini, ``Relic gravitons and non-stationary processes,'' JCAP \textbf{11} (2024) 027. \url{https://doi.org/10.1088/1475-7516/2024/11/027}
\bibitem{Calzetta2008} E. Calzetta and B. L. Hu, \textit{Nonequilibrium Quantum Field Theory} (Cambridge University Press, Cambridge, 2008).
\bibitem{Hu2004} B. L. Hu and E. Verdaguer, Living Rev. Relativ. \textbf{7}, 3 (2004). \url{https://doi.org/10.12942/lrr-2004-3}
\bibitem{Martin1999} R. Mart\'in and E. Verdaguer, Phys. Rev. D \textbf{60}, 084008 (1999). \url{https://doi.org/10.1103/PhysRevD.60.084008}
\bibitem{Phillips2001} N. G. Phillips and B. L. Hu, Phys. Rev. D \textbf{63}, 104001 (2001). \url{https://doi.org/10.1103/PhysRevD.63.104001}
\bibitem{Breuer2002} H.-P. Breuer and F. Petruccione, \textit{The Theory of Open Quantum Systems} (Oxford University Press, Oxford, 2002).
\bibitem{Lindblad1976} G. Lindblad, Commun. Math. Phys. \textbf{48}, 119 (1976). \url{https://doi.org/10.1007/BF01608499}
\bibitem{Gorini1976} V. Gorini, A. Kossakowski, and E. C. G. Sudarshan, J. Math. Phys. \textbf{17}, 821 (1976). \url{https://doi.org/10.1063/1.522979}
\bibitem{Oppenheim2023} J. Oppenheim, Phys. Rev. X \textbf{13}, 041040 (2023). \url{https://doi.org/10.1103/PhysRevX.13.041040}

\end{thebibliography}
\end{document}